\title{Exact Simulation of Bessel Diffusions}
\def\E{\mathbb{E}}
\def\P{\mathbb{P}}
\def\R{\mathbb{R}}
\def\C{\mathbb{C}}
\def\s{\mathfrak{s}}
\def\m{\mathfrak{m}}
\def\I{\mathcal{I}}
\def\M{\mathcal{M}}
\def\U{\mathcal{U}}
\def\G{\mathcal{G}}
\def\F{\mathsf{F}}
\def\X{\mathsf{X}}
\abstract{We consider the exact path sampling of the squared Bessel process and some other continuous-time Markov processes, such as the CIR model, constant elasticity of variance diffusion model, and hypergeometric diffusions, which can all be obtained from a squared Bessel process by using a change of variable, time and scale transformation, and/or change of measure. All these diffusions are broadly used in mathematical finance for modelling asset prices, market indices, and interest rates.
We show how the probability distributions of a~squared Bessel bridge and a~squared Bessel process with or without absorption at zero are reduced to randomized gamma distributions. Moreover, for absorbing stochastic processes, we develop a new bridge sampling technique based on conditioning on the first hitting time at zero. Such an approach allows us to simplify simulation schemes. New methods are illustrated with pricing path-dependent options.}
\keywords{Squared Bessel process, bridge sampling, first hitting time, CIR and CEV diffusion models, hypergeometric diffusions, financial modeling, path-dependent options, randomized quasi-Monte Carlo method}
\begin{document}

\maketitle

\section{Introduction} \label{sect1}
In this paper we study the exact path simulation of solvable continuous-time stochastic processes with transition probability density functions being obtainable in analytically closed-form. Despite the popularity of various approximation schemes for stochastic differential equations (SDEs), the \emph{precise} path sampling of continuous-time Markov processes has certain advantages. Sampling from the exact probability distribution allows us to avoid introducing a bias and also to integrate along a path over an arbitrarily long time horizon. 

Our main motivation is the Monte Carlo pricing of path-dependent financial derivatives. The no-arbitrage price of a European-style option takes the form of a multidimensional integral along a path of an underlying asset price process. The usual procedure to the evaluation of such an integral is to employ the Monte Carlo method.
Pricing of an American-style option reduces to solving a dynamic-programming problem. Therefore, to apply the Monte Carlo method we have to sample paths from the exact distribution of the asset price process (e.g., see~\cite{GL04}).

More specifically, we study continuous-time Markov processes that arise from a squared Bessel (SQB) diffusion such as the squared radial Ornstein-Uhlenbeck process (known also as the Cox-Ross-Ingersoll model), the constant-elasticity of diffusion model (with a power volatility function), and so-called hypergeometric diffusions obtained from the squared Bessel process by means of a special combination of a change of measure and changes of variables (see \cite{CM07,CM08,CM09}). All these stochastic processes are broadly used in mathematical finance. Although for these models many fundamental quantities such as probability distributions of the first-hitting time at a barrier, maximum and minimum values, and pricing formulas for barrier and lookback options can be obtained in closed-form, the Monte-Carlo method remains an important tool for the verification of analytical formulas and also for pricing Asian and American derivatives.

As is shown in \cite{Yuan}, the transition probability distributions of a squared Bessel process (without absorption at zero) and a squared Bessel bridge relate to the so-called randomized gamma distributions, which are mixture gamma distributions with a random rate parameter.
The simulation of an~SQB process with absorption at the origin is less studied in the literature. As is shown in \cite{CM07}, the normalized transition density function of the SQB process is a gamma density which is randomized by a discrete probability distribution generated by a power series expansion of the lower incomplete gamma function. Therefore, to sample an increment of the random process we first simulate the absorption event and then sample from the normalized density function in case of surviving. Since we are able to derive the first-hitting time distribution of the SQB process with absorption at zero, it is possible to implement a completely different approach. First, we sample the first-hitting time, $\tau_0$, at the origin. After that, we sample the Bessel bridge with its value at time $\tau_0$ tied at zero. We show that the simplest realization of such an approach allows us to sample a path of the SQB process by only employing the gamma and Poisson probability distributions.

The paper is organized as follows. Section~2 gives some basis results about the squared Bessel process and the squared Bessel bridge. Section~3 provides different sampling algorithms. In Section~4, we introduce other diffusion processes arising from the SQB process and provide simulation algorithms for them. Section~5 contains some numerical results.

\section{The Squared Bessel Process and Bessel Bridge}\label{sect2}

\subsection{The Squared Bessel Process} \label{subsect2.1}
Let us consider a $\lambda_0$-dimensional squared Bessel (SQB) process $(X_t)_{t\geq 0}$ obeying the stochastic differential equation (SDE)
\begin{equation}\label{Bessel}
  dX_t=\lambda_0 dt+\nu\sqrt{X_t}dW_t,\;X_t\in\I=(0,\infty),
\end{equation} with constant parameters $\lambda_0$ and $\nu>0$.  The scale and speed
densities are respectively $\s(x)=x^{-\mu-1}$
and $\m(x)=\frac{2}{\nu^2}x^{\mu},$ where
$\mu\equiv\frac{2\lambda_0}{\nu^2}-1$ is called the index of the process. The left-hand boundary $l=0$ is entrance if $\mu\geq0$, regular if $-1<\mu<0$, or exit if $\mu\leq -1$. The right-hand boundary $r=\infty$ is natural. For the regular diffusion on $\I$ the transition probability density function (PDF) is given by
\begin{equation}
p(t;x,y) \equiv \textstyle\frac{\P(X_t\in dy| X_0=x)}{dy} = \displaystyle\left({y\over
x}\right)^{\frac{\mu}{2}}
    \,{e^{-2(x + y)/\nu^2t} \over \nu^2t/2}
     I_{\tilde{\mu}}\left({4\sqrt{xy}\over \nu^2t}\right).
 \label{PrkernelSQB}
\end{equation}
where $\tilde{\mu}=\mu$ if $l=0$ is entrance or a regular reflecting boundary, and $\tilde{\mu}=|\mu|$ if $l=0$ is exit or a regular killing boundary.

For simplicity of presentation, we assume here that $\nu=2$. A simple scale transformation $X_t^{(\nu'_0,\lambda'_0)} = \left(\frac{\nu'_0}{\nu''_0}\right)^2 X_t^{(\nu''_0,\lambda''_0)},$ $\lambda'_0=\lambda''_0\left(\frac{\nu'_0}{\nu''_0}\right)^2$, allows us to modify $\nu$ without changing  $\mu$ (i.e. $\mu'=\mu''$).

\subsection{The First Hitting Time Distribution} \label{subsect2.2}
In the case when $l=0$ is an absorbing boundary ($\mu<0$, $\tilde{\mu}=|\mu|$), the density in (\ref{PrkernelSQB}) does not satisfy probability conservation on $\I$.
The first hitting time (FHT), $\tau_0$, at zero for the SQB process $(X_t)$ starting at $x_0$ is defined by $\tau_0 = \inf\{t\;:\; X_t=0\mid X_0=x_0\}$. The PDF $q(x_0;\tau)$ for the FHT distribution is given by
\begin{equation} \label{FHTPDF} q(x_0;\tau) = -\frac{\partial}{\partial \tau} \int_0^\infty p(\tau;x_0,x)dx. \end{equation}
By using that the transition PDF $p$ satisfies Kolmogorov equations, we simplify the expression in (\ref{FHTPDF}) to obtain
\begin{equation} \label{FHTPDFlim}  q(x_0;\tau) = \frac{1}{\s(x)} \frac{\partial}{\partial  x}\left(  \frac{p(\tau;x_0,x)}{\m(x)}\right)\bigg|_{x=\infty}^{x=0+}. \end{equation}
As a result, we derive a closed-form expression for the FHT PDF:
\begin{equation}\label{SQBFHTPDF}
    q(x_0;\tau) = \displaystyle\frac{1}{\tau\Gamma(|\mu|)} \left(\frac{x_0}{2\tau}\right)^{|\mu|} \exp\left( -\frac{x_0}{2\tau}\right).
\end{equation}
A simple change of variable reduces the PDF in (\ref{SQBFHTPDF}) to that of the gamma distribution $\mathrm{G}(\alpha,\beta)$ with shape parameter $\alpha=|\mu|$ and rate parameter $\beta=1$. Therefore, the FHT, $\tau_0$, can be sampled by using the formula
$\tau_0 = \frac{x_0}{2Y},$ where  $Y\sim \mathrm{G}(|\mu|,1).$

\subsection{The Squared Bessel Bridge} \label{subsect2.3}
Let $0\leq t_1<t<t_2$. Consider a stochastic bridge generated by a continuous-time Markov process $(X_t)_{t\geq 0}\in\I$ with $X_{t_1}$ and $X_{t_2}$ tied at $x_1$ and $x_2$, respectively. The bridge PDF $b$ defined by $b(t_1,t_2,t;x_1,x_2,x)dx=\P\{X_t\in dx| x_{t_1}=x_1,X_{t_2}=x_2\}$ can be expressed in terms of the transition PDF $p$ of $(X_t)$ as follows:
\begin{equation} \label{BRGPDF}
   b(t_1,t_2,t;x_1,x_2,x) = \frac{p(t-t_1;x_1,x)p(t_2-t;x,x_2)}{p(t_2-t_1;x_1,x_2)}.
\end{equation}
Clearly, the bridge PDF $b$ in (\ref{BRGPDF}) integrates to unity thanks to the Chapman-Kolmogorov equation
$ p(t_2-t_1;x_1,x_2) = \int_\I p(t-t_1;x_1,x)p(t_2-t;x,x_2) dx$.
Notice that for the bridge density of a Gaussian process may also be derived in closed form by using a conditional multivariate normal distribution.

The PDF of the squared Bessel bridge $(X_t)_{0\leq t\leq T}$ conditional on $X_0=x$ and $X_T=z$ is given by
\begin{equation}
 b(0,T,t;x,z,y) =\frac{T}{2t(T-t)}
 e^{-\displaystyle\frac{\bar{x}+\bar{y}}{2t}-\frac{\bar{z}t}{2}}\frac{I_{\tilde{\mu}}(\sqrt{\bar{x}\bar{y}}/t)I_{\tilde{\mu}}(\sqrt{\bar{y}\bar{z}}/(T-t))}{I_{\tilde{\mu}}(\sqrt{\bar{x}\bar{z}}/T)},
 \label{BRGPDFfull}
\end{equation}
where $\bar{x} \equiv \frac{x\,(T-t)}{T}$, $\bar{y} \equiv \frac{y\,T}{T-t}$, and $\bar{z} \equiv \frac{z}{T(T-t)}$, $0<t<T.$

Suppose that $X_t$ is sampled conditionally on the FHT, $T=\tau_0$. If $t\geq\tau_0$, then set $X_t=0$. Otherwise, if $t<\tau_0$, we use the Bessel bridge with $X_0$ and $X_{T=\tau_0}$ tied at $x$ and $z=0$, respectively. In the limiting case as $z\to 0+$ in (\ref{BRGPDFfull}), we obtain
\begin{equation}
  b(0,T,t;x,0,y) = \frac{T}{2t(T-t)} \left(\frac{\bar{y}}{\bar{x}}\right)^{\tilde{\mu}/2}\,\exp\left(-\frac{\bar{x} + \bar{y}}{2t}\right) I_{\tilde{\mu}}\left(\frac{\sqrt{\bar{x}\bar{y}}}{t}\right).  \label{BRGPDF0}
\end{equation}
Notice that the PDF in (\ref{BRGPDF0}) has the same form as that in (\ref{PrkernelSQB}).

\section{Simulation Algorithms} \label{sect3}
In this section we present several algorithms for the precise path generation of the SQB process $(X_t)$. That is, for every time partition $0=t_0<t_1<\cdots<t_N$, $N\geq 1$, we sample a path-skeleton $\mathbf{X}\equiv(X_0,X_1,\ldots,X_N)$, $X_n\equiv X_{t_n}$, from the exact multivariate probability distribution. The algorithms proposed below are all based on sampling from a randomized gamma distribution of the form $\mathrm{G}(\alpha+Y,\beta)$, where $\alpha+Y>0$ and $\beta>0$ are scale and rate parameters, respectively, and $Y$ is a nonnegative integer-valued random variable.
As is mentioned above, we assume that $\nu=2$, so all algorithms presented below deal with this case. In the general situation when $\nu\neq 2$, we proceed as follows. For given $\lambda_0,$ $\nu$, $X_0$, sample a path of the SQB process  with $\mu=2\lambda_0/\nu^2-1$ that starts at $\left(\frac{2}{\nu}\right)^2X_0$ by using one of algorithms in Figures~\ref{AlgSQB1}--\ref{AlgSQB4}. After that, rescale the path obtained by multiplying its values by $\left(\frac{\nu}{2}\right)^2$.

\subsection{Randomized Gamma Distributions} \label{subsect3.1}
Suppose that a discrete random variable $Y$ has discrete probabilities $\P\{Y=n\}=p_n,$ $n=0,1,2,\ldots.$ The PDF $f$ of the mixture probability distribution $\mathrm{G}(\alpha+Y,\beta)$ admits the form of a series expansion:
$f(x) = \sum_{n=0}^\infty p_n \frac{\beta^{\alpha+n}}{\Gamma(\alpha+n)}x^{\alpha+n-1} e^{-\beta x}.$

Let us consider three choices for the randomizer $Y$ of the gamma distribution $\mathrm{G}(\alpha+Y,\beta)$. The resulting distributions are called the randomized gamma distribution of the first, second, and third types, respectively.

Let $Y_1\sim \mathrm{P}(\lambda)$ be a Poisson random variable with mean $\lambda>0$.
The randomized gamma distribution of the {\it first type} is $\mathrm{G}(Y_1+\theta+1,\beta)$, $\theta>-1,$ $\beta>0,$ with the PDF
\begin{equation}
f_1(y)=\beta\left(\frac{\beta}{\lambda}\right)^{\theta/2} y^{\theta/2}e^{-\lambda-\beta y}
   I_\theta(\sqrt{4\beta\lambda y}),
   \quad y>0.
\label{RandG1}
\end{equation}

A discrete random variable $Y_2$ is said to have a Bessel
probability distribution $\mathrm{Bes}(\theta,b)$ with parameters $\theta>-1$ and $b>0$ if
\begin{equation} \label{ProbBessel} \P\{Y_2=n\}=\frac{(b/2)^{2n+\theta}}{I_\theta(b)\;n!\;\Gamma(n+\theta+1)},\quad
n=0,1,2,\ldots.
\end{equation}
This distribution is related to many other distributions, where the Bessel function $I$ is involved in the density, including the squared Bessel bridge distribution (see
\cite{Yuan} for details).
The randomized gamma distribution of the {\it second type} is a
mixture distribution $\mathrm{G}(Y_1+2Y_2+\theta+1,\beta),$ $\beta>0,$
$\theta>-1,$ where $Y_1\sim
\mathrm{P}((a+b)/(4\beta))$ and $Y_2\sim \mathrm{Bes}(\theta,\sqrt{ab}/(2\beta))$ are independent Poisson and Bessel variates,
respectively. For any positive numbers $\beta$, $a,$ $b,$ and $\theta>-1$, the PDF is
\begin{equation}
   f_2(y)=\frac{\beta}{I_\theta(\sqrt{ab}/(2\beta))}
   e^{-(a+b)/4\beta-\beta y}I_\theta(\sqrt{ay})I_\theta(\sqrt{by}),
   \quad y>0.
\label{RandG2}
\end{equation}

A discrete random variate $Y_3$ is said to follow an {\it incomplete Gamma} probability distribution, which we simply denote by $\mathrm{I}\Gamma(\theta,\lambda)$ with
parameters $\lambda>0$ and $\theta>0$, if
\begin{equation}
\P\{Y_3=n\}=e^{-\lambda}\frac{\lambda^{n+\theta}}{
   \Gamma(n+\theta+1)} \frac{\Gamma\left(\theta\right)}{\gamma\left(\theta,\lambda\right)},\quad n=0,1,2,\ldots.
\label{ProbIG}
\end{equation}
Notice that if $\theta=0,1,2,\ldots$, then the distribution of $Y_3$ is a truncated and shifted Poisson distribution thanks to the property \[\frac{\gamma\left(m,a\right)}{\Gamma\left(m\right)} = 1- \left( 1+x+ \ldots+\frac{x^{m-1}}{(m-1)!}\right)e^{-x},\; m=0,1,2,\ldots.\]

We call a mixture Gamma distribution $\mathrm{G}(Y_3+1,\beta)$, $Y_3\sim\mathrm{I}\Gamma(\theta,\lambda)$, the randomized gamma distribution of the \emph{third type}. The PDF is
\begin{equation}
f_3(y)=\beta\frac{\Gamma\left(\theta\right)}{\gamma\left(\theta,\lambda\right)}\left(\frac{\beta}{\lambda}\right)^{-\theta/2} y^{-\theta/2}e^{-\lambda-\beta y}
   I_\theta(\sqrt{4\beta\lambda y}),
   \quad y>0.
\label{RandG3}
\end{equation}

\subsection{Simulation of Processes without Absorption} \label{subsect3.2}
\begin{figure}
\centering
 \framebox[0.65\linewidth]{\begin{minipage}{0.65\linewidth}
    \begin{algorithmic}
    \STATE \textbf{input} $X_0>0,$ $0=t_0<t_1<\cdots<t_N,$ $\mu>-1$
    \FOR {$n$ from 1 to $N$}
    \STATE  $Y_n\sim \mathrm{P}\left(\displaystyle\frac{X_{n-1}}{2
 (t_n-t_{n-1})}\right)$
    \STATE  $X_n\sim \mathrm{G}\left(Y_n+\mu+1,
                          \displaystyle\frac{1}{2(t_n-t_{n-1})} \right)$
    \ENDFOR
    \RETURN $(X_0, X_1,\ldots, X_N)$
    \end{algorithmic}
 \end{minipage}}

  \caption{ \label{AlgSQB1} \small The sequential sampling method for modeling an SQB process without absorption.}
\end{figure}

The randomized distribution of the first type is closely connected with the transition distribution of a squared
Bessel process $(X_t)$ without absorption (i.e. $\mu\geq 0$, or $\mu\in(-1,0)$ and $x=0$ is a reflecting boundary). The conditional distribution of $X_t$, $t>0$, given $X_0=x_0>0$, is then a randomized gamma distribution of
the first type. The transition PDF in (\ref{PrkernelSQB}) with $\nu=2$ has the form of the PDF $f_1$ in (\ref{RandG1}) with $\theta=\mu$, $\beta=1/2t$, and $\lambda=x_0/2t$. Therefore, we have the following sampling scheme:
\begin{equation} \label{simbes}
X_t\sim \mathrm{G}(\mu+Y+1,1/2t),\mbox{ where }\, Y\sim \mathrm{P}(x_0/2t),\;t>0.
\end{equation}
The sampling algorithm is presented in Figure~\ref{AlgSQB1}.

A path of the standard squared Bessel bridge can be generated using the second type randomized gamma distribution. The bridge PDF in
(\ref{BRGPDFfull}) reduces to that in (\ref{RandG2}) by
setting $a\equiv x/t^2$, $b\equiv z/(T -t)^2$,
$\beta\equiv \frac{T}{2t(T-t)}$, and $\theta=\mu$. Then, $X_t$ conditional on $X_0=x$ and $X_T=z$, $0<t<T$, can be obtained by
generating two independent random variables $Y\sim \mathrm{P}\left( \frac{1}{2T} \left[ \frac{T-t}{t}x+\frac{t}{T-t}z \right] \right)$ and $Z\sim \mathrm{Bes}\left(\mu,\frac{\sqrt{xz}}{T}\right)$, and then $X_t\sim \mathrm{G}\left(Y+2Z+\mu+1,\frac{T}{2t(T-t)}\right)$.

\subsection{Sequential Simulation of Processes with Absorption} \label{subsect3.3}

\begin{figure}
\centering
 \framebox[0.65\linewidth]{\begin{minipage}{0.65\linewidth}
    \begin{algorithmic}
    \STATE \textbf{input} $X_0>0,$ $0=t_0<t_1<\cdots<t_N,$ $\mu<0$
    \STATE $\tilde\tau_0 \gets \infty$
    \FOR {$n$ from 1 to $N$}
      \IF{$\tilde\tau_0=\infty$}
        \STATE $p_a\gets\Gamma\left(|\mu|,\displaystyle\frac{X_{n-1}}{2(t_n-t_{n-1})}\right)/\Gamma(|\mu|)$
        \STATE  $U_n\sim \mathrm{U}(0,1)$
        \STATE \textbf{if} $U_n<p_a$ \textbf{then} $\tilde\tau_0\gets t_n$
      \ENDIF
      \IF{$t_n<\tilde\tau_0$ }
        \STATE  $Y_n\sim \mathrm{I\Gamma}\left(|\mu|,\displaystyle\frac{X_{n-1}}{2(t_n-t_{n-1})}\right)$
        \STATE  $X_n\sim \mathrm{G}\left(Y_n+1,\displaystyle\frac{1}{2(t_n-t_{n-1})}\right)$
      \ELSE
        \STATE $X_n\gets0$
      \ENDIF
    \ENDFOR
    \RETURN $(X_0, X_1,\ldots, X_N)$ and $\tilde\tau_0$
    \end{algorithmic}
 \end{minipage}}
  \caption{ \label{AlgSQB2} \small The sequential sampling method for an~SQB process with absorption at the origin.}
\end{figure}

Assume that a stochastic process $(X_t)_{t\ge 0}\in R_+$ admits absorption at the origin. For example, for an~SQB process we have that $\mu<0$ and $x=0$ is a killing boundary or exit. Clearly, the transition PDF $p$ given by (\ref{PrkernelSQB}) with $\tilde{\mu}=|\mu|$, $\mu<0$,  does not integrate to one. Let us define the probability $P_s$ of surviving before time $t$  and the probability $P_a$ of absorption before time $t$ for the process $(X_t)$ started at $X_0=x$:
\[ P_s(x;t) = \int_0^\infty p(t;x,y)dy>0 \mbox{ and } \,P_a(x;t) = 1-P_s(x;t)>0. \]
Observe that the actual transition probability distribution is then a mixture of continuous and discrete probability distributions with the following generalized PDF:
\[ p(X_0\to X_t) = P_s(X_0;t) \cdot \left( \frac{p(t;X_0,X_t)}{P_s(X_0;t)} \right) + P_a(X_0;t) \cdot \delta(X_t), \]
where $\delta$ denotes a delta function.

\begin{figure}
\centering
 \framebox[0.75\linewidth]{\begin{minipage}{0.75\linewidth}
    \begin{algorithmic}
    \STATE \textbf{input} $X_0>0,$ $0=t_0<t_1<\cdots<t_N,$ $\mu<0$
    \STATE  $Y\sim \mathrm{G}(|\mu|,1)$, \quad $\tau_0 \gets \displaystyle\frac{X_0}{2Y}$
    \FOR {$n$ from 1 to $N$}
      \IF{$t_n<\tau_0$ }
        \STATE  $Y_n\sim \mathrm{P}\left(\displaystyle\frac{X_{n-1}(\tau_0-t_n)}{2 (\tau_0-t_{n-1})(t_n-t_{n-1})}\right)$
        \STATE  $X_n\sim \mathrm{G}\left(Y_n+|\mu|+1, \displaystyle\frac{\tau_0-t_{n-1}}{(\tau_0-t_n)(t_n-t_{n-1})} \right)$
      \ELSE
        \STATE $X_n\gets0$
      \ENDIF
    \ENDFOR
    \RETURN $(X_0, X_1,\ldots, X_N)$ and $\tau_0$
    \end{algorithmic}
 \end{minipage}}
  \caption{ \label{AlgSQB3} \small The sequential sampling method conditional on the FHT, $\tau_0$, for modeling an~SQB process with absorption at the origin.}
\end{figure}

By using (\ref{SQBFHTPDF}), we obtain the following probabilities of surviving and absorption of the SQB process before time $t$:
\[
  P_s(x;t) =  \P\{\tau_0>t\} = \frac{\gamma\left(|\mu|,\frac{x}{2t}\right)}{\Gamma(|\mu|)}\mbox{ and }
  P_a(x;t) = \P\{\tau_0\leq t\} = \frac{\Gamma\left(|\mu|,\frac{x}{2t}\right)}{\Gamma(|\mu|)},
\]
where $\gamma(a,x)$ and $\Gamma(a,x)$ are the lower and upper incomplete gamma functions, respectively.
The normalized transition PDF of the SQB process conditioned on the survival of the process before time $t$ is
\begin{equation} \label{normSQBPDF} \frac{p(t;x,y)}{P_s(x;t)} = \frac{\Gamma\left(|\mu|\right)}{\gamma\left(|\mu|,\frac{x}{2t}\right)}\left({x\over
x_0}\right)^{\frac{\mu}{2}}
    \,{e^{-(x + x_0)/2t} \over 2t}
     I_{|\mu|}\left({\sqrt{xx_0}\over t}\right)\,.
\end{equation}
As is seen, the function in the right-hand side of (\ref{normSQBPDF}) reduces to the form of (\ref{RandG3}) with $\theta=|\mu|$, $\lambda=x/2t$, and $\beta=1/2t$.
Thus, the above normalized
transition PDF follows the randomized gamma distribution of the third kind $\mathrm{G}(Y+1,1/2t)$, where $Y\sim \mathrm{I}\Gamma(|\mu|,x/2t)$.
As a result, we obtain the sampling algorithm given in Figure~\ref{AlgSQB2}. The algorithm returns a sample path $\mathbf{X}$ and an approximation, $\tilde\tau_0\in\{t_1,\ldots,t_N,\infty\}$, of the FHT, $\tau_0$.

\subsection{Bridge Simulation of Processes with Absorption} \label{subsect3.4}
Consider again the~SQB process $(X_t)$ with absorption at the origin. Since the first hitting time PDF $q(x_0;\tau)$ is available, we may first sample the FHT, $\tau_0$, and then simulate a path of $(X_t)_{t \geq 0}$ conditional on $\tau_0$ by using the bridge distribution. As is seen from (\ref{BRGPDF0}), the PDF of $X_t$, $0<t<\tau_0$, conditional on $X_0=x$ and $X_{\tau_0}=0$ is reduced to the PDF $f_1$ in (\ref{RandG1}) of the randomized gamma distribution of the first type with $\theta=|\mu|$, $\lambda=\frac{x(\tau_0-t)}{2\tau_0 t}$, and $\beta=\frac{\tau_0}{2t(\tau_0-t)}$. As a result, we obtain a sequential sampling algorithm conditional on the FHT (see Figure~\ref{AlgSQB3}).

At last, in Figure~\ref{AlgSQB4}, we provide the full bridge sampling algorithm, where a path $\mathbf{X}=(X_0,X_1,\ldots,X_N)$, $N=2^k$, $k\ge 1$, is sampled at the time points in the following order of generation:
\[
   t_N, t_{N/2}, \underbrace{t_{N/4}, t_{3N/4}},
   \underbrace{t_{N/8}, t_{3N/8}, t_{5N/8}, t_{7N/8}},
   \ldots,
   \underbrace{t_2,t_6,\ldots, t_{N-2}}, \underbrace{t_{1}, t_{3},\ldots, t_{N-1}}.
\]
Here, we use that the bridge PDF in (\ref{BRGPDFfull}) with $\tilde\mu=|\mu|$ reduces to that in (\ref{RandG2}) by setting $a\equiv x/t^2$, $b\equiv z/(T -t)^2$,
$\beta\equiv \frac{T}{2t(T-t)}$, and $\theta=|\mu|$. Such a bridge sampling algorithm is very useful for the quasi-Monte Carlo pricing of path-dependent options.

\begin{figure}
\centering
 \framebox[0.9\linewidth]{\begin{minipage}{0.9\linewidth}
    \begin{algorithmic}
    \STATE \textbf{input} $X_0>0,$ $0=t_0<t_1<\cdots<t_N,$ $N=2^k$, $\mu<0$
    \STATE  $Y\sim \mathrm{G}(|\mu|,1)$, \quad $\tau_0 \gets \displaystyle\frac{X_0}{2Y}$
    \IF {$t_N<\tau_0$}
        \STATE  $Y_N\sim \mathrm{P}\left(\displaystyle\frac{X_0(\tau_0-t_N)}{2 \tau_0 t_N}\right)$, \quad  $X_N\sim \mathrm{G}\left(Y_n+|\mu|+1, \displaystyle\frac{\tau_0}{t_N(\tau_0-t_N)} \right)$
    \ELSE
      \STATE $X_N\gets 0$
    \ENDIF
    \FOR {$l$ from 1  to $k$}
      \FOR {$m$ from 1  to $2^{l-1}$}
        \STATE $n=(2m-1)2^{k-l}$
        \IF{$t_n\geq\tau_0$}
          \STATE $X_n\gets 0$
        \ELSE
          \STATE $n_1\gets {n-2^{k-l}}$, \quad $n_2\gets {n+2^{k-l}}$
          \IF{$t_{n_2}\geq\tau_0$}
            \STATE  $Y_n\sim \mathrm{P}\left(\displaystyle\frac{X_{n_1}(\tau_0-t_n)}{2 (\tau_0-t_{n_1})(t_n-t_{n_1})}\right)$
            \STATE  $X_n\sim \mathrm{G}\left(Y_n+|\mu|+1,
                          \displaystyle\frac{\tau_0-t_{n_1}}{(\tau_0-t_n)(t_n-t_{n_1})} \right)$
            \ELSE
            \STATE  $Y_n\sim \mathrm{P}\left(\displaystyle\frac{X_{n_1}(t_{n_2}-t_n)}{2 (t_{n_2}-t_{n_1})(t_n-t_{n_1})}+\frac{X_{n_2}(t_{n}-t_{n_1})}{2 (t_{n_2}-t_{n_1})(t_{n_2}-t_{n})}\right)$
            \STATE  $Z_n\sim \mathrm{Bes}\left(|\mu|,\displaystyle\frac{X_{n_1}(\tau_0-t_n)}{2 (\tau_0-t_{n_1})(t_n-t_{n_1})}\right)$            \STATE  $X_n\sim \mathrm{G}\left(Y_n+2Z_n+|\mu|+1,
                          \displaystyle\frac{t_{n_2}-t_{n_1}}{2(t_{n}-t_{n_1})(t_{n_2}-t_n)} \right)$
          \ENDIF
        \ENDIF
      \ENDFOR
    \ENDFOR
    \RETURN $(X_0, X_1,\ldots, X_N)$ and $\tau_0$
    \end{algorithmic}
 \end{minipage}}
  \caption{ \label{AlgSQB4} \small The full bridge sampling method conditional on the FHT, $\tau_0$, for modeling an~SQB process with absorption at the origin.}
\end{figure}

\section{Generating Paths of the CIR, CEV, and Hypergeometric Diffusions}\label{sect4}

\subsection{The CIR Process}\label{subsect4.1}
Consider the Cox-Ingerssol-Ross (CIR) diffusion process
$(Y_t)_{t\ge 0}\in\I=R_+$ solving the SDE
\begin{equation} \label{CIRmodel}
   dY_t=(\lambda_0-\lambda_1Y_t)dt+\nu\sqrt{Y_t}dW_t\,,
\end{equation}
where constant parameters $\lambda_0$, $\lambda_1$, and $\nu>0$. 
The respective scale and speed densities are $\s(x) =
x^{-\mu-1}e^{\kappa x}$  and $\m(x)=
\frac{2}{\nu^2}x^{\mu}e^{-\kappa x},$ where $\kappa \equiv
\frac{2\lambda_1}{\nu^2}.$
The boundary classification of the CIR process is equivalent that of
the SQB process. For the regular diffusion on $\I$, the transition PDF is
\begin{equation} \label{CIRden}
 p(t;x,y) = c_te^{\lambda_1 t}\left(\frac{ye^{\lambda_1 t}}{x}\right)^{\mu/2}
    \,e^{-c_t(ye^{\lambda_1t} + x)}
     I_{\tilde\mu}\left(2c_t\sqrt{ xye^{\lambda_1t} }\right)\,,
\end{equation}
where $c_t \equiv \kappa/(e^{\lambda_1t}-1)$ and $\tilde{\mu}$ is defined as for the SQB process in Section~\ref{sect2}.

The CIR process is reduced to an~SQB process with the same parameters $\lambda_0$ and $\nu$ by means of scale and time transformation, $Y_t = e^{-\lambda_1 t} X_{s_{\lambda_1}(t)}$, where the monotonic time-transformation function $s_{\lambda_1}$ is defined by
\begin{equation} \label{timetransf} s_{\lambda_1}(t) \equiv \left\{\begin{array}{ll} t &\mbox{ if } \lambda_1=0,\\ \frac{e^{\lambda_1 t}-1}{\lambda_1}  &\mbox{ if } \lambda_1\neq 0. \end{array} \right.\end{equation}

The transition PDF for the CIR process relates to that of the SQB process as follows:
\[
    p^{(CIR)}(t;x,y) = e^{\lambda_1 t}p^{(SQB)}(s_{\lambda_1}(t);x,e^{\lambda_1 t}y).
\]

If a~reflecting boundary condition is imposed at $x=0$, or the origin is entrance, then the CIR diffusion is a conservative stochastic process. The~corresponding transition density is given by (\ref{CIRden}) with $\tilde\mu=\mu>-1$. The transition distribution of the conservative CIR model reduces to  the randomized gamma distribution of the first type. The respective SQB process admits no absorption at zero and can be simulated by the sequential method in Figure~\ref{AlgSQB1}.

Consider the case where $x=0$ is a killing boundary or exit, so the transition PDF is given by (\ref{CIRden}) with $\tilde\mu=|\mu|$, where $\mu<0$. The FHT, $\tau_0$, at zero for the CIR model is given by
      \[ \tau_0^{(CIR)} \equiv \inf\{t\,:\, Y_t =0 \} \equiv \inf\{t\,:\, X_{s_{\lambda_1}(t)} = 0\} \overset{d}{=} s_{\lambda_1}^{-1}(\tau_0^{(SQB)}), \]
where we define $s_{\lambda_1}^{-1}(\tau)=\infty$ if $\tau>s_{\lambda_1}(\infty)$.
The corresponding PDF is given by $q^{(CIR)}(x_0;\tau) = e^{\lambda_1 \tau}q^{(SQB)}(x_0;s_{\lambda_1}(\tau))$. We have that $\P\{\tau_0^{(CIR)}<\infty\} = \P\{\tau_0^{(SQB)}<s_{\lambda_1}(\infty)\}$.

Clearly, the sampling of a CIR path at times $t_i,$ $i=0,1,\ldots,N$, reduces to the sampling of an SQB trajectory. The method for sampling a path and the FHT, $\tau_0$, is given as follows.
\begin{enumerate}[Step 1.]
  \item Set times $s_i=s_{\lambda_1}(t_i)$, $i=0,1,\ldots,N$.
  \item Obtain a sample path $(X_0,X_1,\ldots,X_N)$ of the SQB process at times $s_i$, $i=0,1,\ldots,N$, and the FHT, $\tau_0^{(SQB)}$, (or its approximation $\tilde\tau_0$) by using one of the algorithms in Figures \ref{AlgSQB1}--\ref{AlgSQB4}.
  \item  Set $Y_i = e^{-\lambda_1 t_i} X_i$, $i=0,1,\ldots,N$.
  \item Set $\tau_0^{(CIR)}=\left\{\begin{array}{ll}
                              s^{-1}(\tau_0^{(SQB)})  & \text{if } \tau_0^{(SQB)}<s_{\lambda_1}(\infty)\\
                              \infty & \text{otherwise}
                            \end{array} \right.$.
  \item Return $(Y_0,Y_1,\ldots,Y_N)$ and
 $\tau_0^{(CIR)}$.
\end{enumerate}

\subsection{The CEV Diffusion Model}\label{subsect4.2}
The~constant elasticity of variance (CEV) diffusion process $\{F_t\}_{t\geq 0}$ obeys the~sto\-cha\-stic differential equation $dF_t=r F_tdt+\delta F_t^{\beta +1}dW_t,$ $t\geq 0,$ $F_0>0$,
where $r ,\delta ,\beta $ are real parameters.  We assume here that $\delta>0$ and $\beta<0$.

The~boundary $F=0$ of the~state space $(0,\infty)$ is regular if $\beta<-0.5$ or exit if $-0.5\leq \beta<0$. Here we consider the~case where the~endpoint $F=0$ is a~killing boundary. The~transition PDF $p_0(t;F_{0},F)$, $F_0,F>0,$ $t>0,$ for the~CEV process $(F^{(0)}_t)$ with zero drift ($r=0$) takes the~form
\begin{equation}
p_0(t;F_{0},F)=\frac{F^{-2\beta -\frac{3}{2}}F_{0}^{\frac{1}{2}}}{\delta ^{2}|\beta |t}%
\exp \left( -\frac{F^{-2\beta }+F_{0}^{-2\beta }}{2\delta ^{2}\beta ^{2}t}%
\right) I_{\frac{1}{2|\beta |}}\left( \frac{F^{-\beta }F_{0}^{-\beta }}{%
\delta ^{2}\beta ^{2}t}\right)\,. \label{cevpdf0}
\end{equation}%
The~density $p_{0}(t;F_{0},F)$ does not integrate (with respect to $F$) to unity for $t>0$, since $F=0$ is an absorbing point.

A~drifted CEV process $F^{(r)}_t$ with $r\neq 0$ is obtained from $F^{(0)}_t$ by means of scale and time transformation:
$F^{(r)}_t=e^{r t}F^{(0)}_{s_{\lambda_1}(t)}$,  where $s_{\lambda_1}$ is given by (\ref{timetransf}) with $\lambda_1\equiv2r\beta$.
The~resulting transition density $p_r$ with $r\neq 0$ is given by
$
p_r(t;F_{0},F)=e^{-r t}p_0(s_{\lambda_1}(t);F_0,e^{-r t}F).
$

The~Monte Carlo simulation of the~CEV diffusion is based on the~reduction of it to the CIR or SQB process by using the mapping $\X(F)\equiv\frac{F^{-2\beta}}{\delta^2\beta^2}$.  There are two dual approaches:
\begin{enumerate}[(i)]
  \item First, eliminate the drift and then, by using the mapping $\X$, reduce the driftless CEV process to an~SQB process defined by $X_t=\X(F^{(0)}_t)$, $t \geq 0$, with $\lambda_0=2+1/\beta$ and $\nu=2$. Sample a path of the SQB process and then obtain a path of the driftless CEV process by applying the mapping $\F(x)\equiv(\delta^2\beta^2 x)^{-1/2\beta}$. After that, restore the drift using the time and scale transformation.
  \item By using the mapping $\X$, reduce the drifted CEV process to a CIR process defined by $Y_t=\X(F^{(r)}_t)$, with $\lambda_0=2+1/\beta$, $\lambda_1=2r\beta$, and $\nu=2$. The resulting CIR process can be obtained from an~SQB process by means of time and scale transformation. Sample a path of the CIR process and then obtain a path of the CEV model by applying the inverse mapping $\F$.
\end{enumerate}
The FHT, $\tau_0$, at zero for the CEV diffusion model is given by
      \[ \tau_0^{(CEV)} \equiv \inf\{t\,:\, F_t =0 \} \overset{d}{=} \tau_0^{(CIR)} \overset{d}{=} s_{\lambda_1=2r\beta}^{-1}(\tau_0^{(SQB)}). \]

Notice that if a~reflecting boundary condition is imposed at $F=0$ when $\beta<-0.5$ (or $\beta>0$ and hence $F=0$ is entrance), then the CEV diffusion is a conservative stochastic process. The~corresponding transition density (for the~case with $\beta<-0.5$) is given by (\ref{cevpdf0}) with the replacement $I_{\frac{1}{2|\beta|}}\to I_{\frac{1}{2\beta}}$. By analogy with the CIR model without absorption at zero, the transition distribution of the conservative CEV model reduces to  the randomized gamma distribution of the first type, hence the algorithm in Figure~\ref{AlgSQB1} is applied.

\subsection{Diffusion Canonical Transformation}\label{subsect4.3}
Several families of analytically solvable diffusions
can be derived from known underlying diffusion processes. We refer to
this construction as the \textit{``diffusion canonical
transformation''} methodology (see \cite{CM07,CM08,CM09} for details).

Let us start with a one-dimensional time-homogeneous regular diffusion $(X_t)_{t\geq 0}\in\I\equiv (l,r)$, $-\infty\leq l<r\leq\infty$,
defined by its infinitesimal generator:
$(\G\, f)(x)\equiv {1\over
2}\nu^2(x)f^{\prime\prime}(x) + \lambda(x)f^\prime(x)$.
The functions $\lambda$ and $\nu$ denote, respectively, the
(infinitesimal) drift and diffusion coefficients of the process.
Consider two linearly independent
\textit{fundamental solutions} $\varphi^+_s$ and $\varphi^-_s$ of
the differential equation $(\G\, \varphi)(x) = s
 \varphi(x),$ $s\in \C,$ $x\in\I,$
such that for real values $s=\rho>0$ the
solutions $\varphi^+_\rho$ and $\varphi^-_\rho$ are
respectively increasing and decreasing functions of $x$ (see,
e.g.,~\cite{BS02}).

Let us introduce another diffusion $(X^{(\rho)}_t)_{t\ge 0}\in \I$ with generator
 \begin{equation} \label{Generator_rho} (\G^{(\rho)}\,f)(x)\equiv\frac{1}{2}\nu^2(x)f^{\prime\prime}(x)
 +\left(\lambda(x)+\nu^2(x)\frac{u'_\rho(x)}{u_\rho(x)}\right)f^{\prime}(x)\,,
 \end{equation}
where a {\it strictly positive}  function
$u_\rho(x),$ $\rho>0$, is a linear combination of~$\varphi^\pm_\rho$:
$u_\rho(x)=q_1 \varphi^+_\rho(x)
+ q_2 \varphi^-_\rho(x),\; q_{1,2}\geq 0, \;q_1+q_2>0.
$
A transition density $p_X^{(\rho)}$ for the $X^{(\rho)}$-diffusion is then related to  a transition density $p_X$ for the $X$-diffusion as follows:
\begin{equation}
p_X^{(\rho)}(t;x_0,x)=e^{-\rho
t}\frac{u_\rho(x)}{u_\rho(x_0)}
p_X(t;x_0,x),\;x,x_0\in\I\,,\;t>0\,. \label{urho}
\end{equation}

Now we consider an $F$-diffusion $\{F_t \equiv \F
(X^{(\rho)}_t), t\ge 0\}$ defined by strictly monotonic real-valued mapping
$F=\F (x)$ with $\F ', \F ''$ continuous on $\I$ and
having infinitesimal generator
$(\mathcal{G}_{\sf F} h)(F) \equiv \frac{1}{2}\,\sigma^2(F)
h''(F) + rFh'(F)$, where $F\in\I_{\sf F}=\left(\min\{\mathsf{F}(l+),\mathsf{F}(r-)\},\max\{\mathsf{F}(l+),\mathsf{F}(r-)\}\right)$, and $r$ is a real constant so that $\rho+r>0$.

The transition PDF $p_F$ for an $F$-diffusion $(F_t)_{t\ge 0}$ is related to
the transition PDF for the underlying $X$ (or $X^{(\rho)}$) diffusion as follows:
\begin{equation} \label{PrKernelF}
  p_F(t;F_0,F) = \frac{\nu(\X (F))}{\sigma(F)}
  \frac{u_\rho\left(\X (F) \right)}{u_\rho\left(\X (F_0) \right)}
  e^{-\rho t} p_X(t;\X (F_0),\X (F))
  \,.
\end{equation}
Here $\X \equiv\F^{-1}$ is the inverse
map. $\F$ admits the general quotient form:
 \begin{equation} \label{FMAP} \F (x) = \frac{c_1 \varphi^+_{\rho+r}(x)+c_2
 \varphi^-_{\rho+r}(x)}{q_1 \varphi^+_{\rho}(x)+q_2
 \varphi^-_{\rho}(x)}\equiv \frac{{v}_{\rho+r}(x)}{u_\rho(x)}
 \end{equation}
where $c_1$ and $c_2$ are real constants. For a full classification of strictly monotonic maps of the form (\ref{FMAP}) see \cite{CM09}.
The diffusion coefficient function is
\begin{equation} \label{sigmaF}
 \sigma(F)=\frac{\nu(x)|W(x)|}{u^2_\rho(x)}\,,\quad x=\X (F)\,,\quad F\in\I_{\sf F}\,,
\end{equation}
where we define the Wronskian $W(x) \equiv u_\rho(x)v'_{\rho+r}(x)-u'_\rho(x)v_{\rho+r}(x)\,.$

In the next two subsections we present two examples of hypergeometric diffusions. The concluding subsections gives a general simulation algorithm.

\subsection{The Bessel-$\mathsf{K}$  Diffusions} \label{sect4.4}
Here we specifically
consider a $4$-parameter Bessel $\mathsf{K}$-family arising from an underlying
($\lambda_0$-dimen\-sional) squared Bessel process with a positive index $\mu$. We use the generating function
$u_\rho(x)=\varphi^-_\rho(x)=x^{-\mu/2}K_{\mu}\left(2\sqrt{2\rho x}/\nu\right)$ and the mapping:
\begin{equation} \label{FmapBesK}
  \F(x)=
     c\displaystyle\frac{I_{\mu} \left( 2\sqrt{2(\rho+r) x}/\nu\right)}{K_{\mu} \left(2\sqrt{2\rho x}/\nu\right)},
\end{equation}
where $c$, $\rho,$ $\nu$, and $\mu$ are independently adjustable
positive parameters, and $r>-\rho$ is a real constant. The functions $I$ and
$K$ denote the modified Bessel functions of the first and second kind, respectively. (see \cite{AS72} for definitions and properties).

 The function $\F(x)$
(and the respective inverse $\X(F)$) maps $x\in(0,\infty)$ and $F\in(0,\infty)$ into one another. The
transformation (\ref{FmapBesK}) hence leads to a family of processes
$(F_t)\in (0,\infty)$ with the diffusion coefficient function
\begin{equation}\label{SigmaBesK}
  \sigma(\F(x))= c\sqrt{2}\left(
     \textstyle\frac{\sqrt{\rho}\,I_\mu\left(\frac{2}{\nu}\sqrt{2(\rho + r) x}\right)K_{\mu+1}\left(\frac{2}{\nu}\sqrt{2\rho x}\right)}
{K_{\mu}^2\left(\frac{2}{\nu}\sqrt{2\rho x}\right)}
+ \textstyle\frac{\sqrt{\rho + r}\,I_{\mu+1}\left(\frac{2}{\nu}\sqrt{2(\rho + r)x}\right)}{K_{\mu}\left(\frac{2}{\nu}\sqrt{2\rho x}\right)} \right)
\end{equation}

\begin{lemma}[Campolieti and Makarov, \cite{CM07,CM09}] \label{lemK}
The processes of the Bessel $\mathsf{K}$-family obeying the SDE
$dF_t=rF_tdt+\sigma(F_t)dW_t$ with (\ref{FmapBesK})--(\ref{SigmaBesK}) have the following boundary classification:
the boundary $F=0$ is exit if $\mu\geq 1$ or is a regular killing boundary if
$0<\mu<1$; the boundary $F=\infty$ is non-attracting natural. Moreover, the discounted process $(e^{-rt} F_t)_{t\ge 0}$ is a martingale. The transition PDF $p_F$ is given by (\ref{PrKernelF}) with $\nu(x)=\nu\sqrt{x}$, and $\sigma$ and $p_X$ respectively specified by (\ref{SigmaBesK}) and (\ref{PrkernelSQB}).
\end{lemma}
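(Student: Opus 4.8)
The statement bundles four assertions — the transition density, the martingale property, and the two boundary classifications — which I would establish in increasing order of difficulty.

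The transition density claim is the most immediate: it is simply the specialization of the general canonical-transformation formula (\ref{PrKernelF}) to the present data. I would substitute $\nu(x)=\nu\sqrt{x}$, the generating function $u_\rho(x)=x^{-\mu/2}K_\mu(2\sqrt{2\rho x}/\nu)$, and the SQB kernel $p_X$ from (\ref{PrkernelSQB}), while reading off $\sigma$ from (\ref{sigmaF}). The only genuine computation is the Wronskian $W(x)=u_\rho v'_{\rho+r}-u'_\rho v_{\rho+r}$ of the two modified-Bessel solutions $u_\rho=\varphi^-_\rho$ and $v_{\rho+r}=c\,\varphi^+_{\rho+r}$ (the latter identification forced by comparing (\ref{FMAP}) with (\ref{FmapBesK})), which after applying the standard derivative and Wronskian identities for $I_\mu,K_\mu$ reproduces (\ref{SigmaBesK}). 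Nothing here requires more than the framework already stated.

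For the martingale property I would first verify that $\F$ is $r$-harmonic for the transformed generator, i.e. $\G^{(\rho)}\F=r\F$. Writing $\F=v_{\rho+r}/u_\rho$ with $\G u_\rho=\rho u_\rho$ and $\G v_{\rho+r}=(\rho+r)v_{\rho+r}$, and using the intertwining identity $\G^{(\rho)}f=\frac{1}{u_\rho}(\G-\rho)(u_\rho f)$ — a short direct calculation from (\ref{Generator_rho}) — gives $\G^{(\rho)}\F=\frac{1}{u_\rho}(\G-\rho)v_{\rho+r}=r\F$, so $e^{-rt}F_t=e^{-rt}\F(X^{(\rho)}_t)$ is a local martingale. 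The delicate point, and the one I expect to be the main obstacle, is upgrading this to a \emph{true} martingale: a nonnegative local martingale is only guaranteed to be a supermartingale, and strict inequality (a ``bubble'') is exactly what must be excluded. I would handle this by direct computation of $\E^{(\rho)}_{x_0}[e^{-rt}\F(X^{(\rho)}_t)]$; applying the measure-change formula (\ref{urho}) collapses the $u_\rho$ factors and reduces the claim to the SQB eigenfunction identity $\E^{(SQB)}_{x_0}[\varphi^+_{\rho+r}(X_t)]=e^{(\rho+r)t}\varphi^+_{\rho+r}(x_0)$, where $\varphi^+_{\rho+r}(x)=x^{-\mu/2}I_\mu(2\sqrt{2(\rho+r)x}/\nu)$. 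This identity holds for the SQB process precisely because $\mu>0$ makes $0$ an entrance boundary and $\varphi^+$ is the increasing eigenfunction; it is a known Bessel integral that I would verify directly from the explicit kernel (\ref{PrkernelSQB}).

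Finally, for the boundary classification I would exploit that $F=\F(x)$ is a smooth strictly increasing bijection of $(0,\infty)$ onto itself with $\F(0+)=0$ and $\F(\infty)=\infty$ (read off from the Bessel asymptotics), so boundary type is a pathwise invariant and it suffices to classify the boundaries of the intermediate diffusion $X^{(\rho)}$. Its scale and speed densities are $\s^{(\rho)}=\s/u_\rho^2$ and $\m^{(\rho)}=u_\rho^2\m$, so inserting the small-$x$ behaviour $u_\rho(x)\sim C x^{-\mu}$ yields $\s^{(\rho)}(x)\sim x^{\mu-1}$ and $\m^{(\rho)}(x)\sim x^{-\mu}$ near $0$; the scale integral $\int_0\s^{(\rho)}$ then converges for every $\mu>0$ while the speed integral $\int_0\m^{(\rho)}$ converges if and only if $\mu<1$. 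Feller's test thus gives a regular boundary at $F=0$ for $0<\mu<1$ — regular killing once the killing carried by the transformation is accounted for — and an exit boundary for $\mu\geq1$. Near $\infty$ the exponential decay of $u_\rho$ makes $\s^{(\rho)}$ blow up, so $S(\infty)=\infty$ forces non-attraction, while $\m^{(\rho)}$ decays exponentially; both Feller integrals diverge in the pattern characteristic of a non-attracting natural boundary, completing the classification.
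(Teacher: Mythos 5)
The paper never proves this lemma: it is stated with attribution to \cite{CM07,CM09} and simply imported, the surrounding text supplying only the canonical-transformation framework of Subsections~\ref{subsect4.3} and~\ref{sect4.4}. So there is no in-paper argument to compare against; judged on its own, your reconstruction is correct and runs along exactly the lines that framework dictates. The identification $u_\rho=\varphi^-_\rho$, $v_{\rho+r}=c\,\varphi^+_{\rho+r}$ is right; the intertwining identity $\G^{(\rho)}f=\frac{1}{u_\rho}(\G-\rho)(u_\rho f)$ does give $\G^{(\rho)}\F=r\F$; and your reduction of the true-martingale claim, via (\ref{urho}) and $\F\,u_\rho=v_{\rho+r}$, to the closed-form SQB identity $\E_{x_0}\bigl[\varphi^+_{\rho+r}(X_t)\bigr]=e^{(\rho+r)t}\varphi^+_{\rho+r}(x_0)$ is the crux handled properly: the identity follows from the kernel (\ref{PrkernelSQB}) and the standard integral $\int_0^\infty e^{-py}I_\mu(a\sqrt{y})I_\mu(b\sqrt{y})\,dy=p^{-1}e^{(a^2+b^2)/4p}I_\mu(ab/2p)$, and constancy of the expectation then upgrades the nonnegative local martingale (a priori only a supermartingale) to a martingale, which is precisely what the statement requires. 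Note also that this computation automatically accounts for the paths killed at $F=0$, since the sub-probability kernel $p_X^{(\rho)}$ carries the mass defect.

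Two steps should be written out rather than asserted. First, at $x=0$ with $\mu\ge 1$, ``scale integral converges, speed integral diverges'' rules out regular and entrance but does not by itself distinguish exit from natural; exit additionally needs the attainability (iterated Feller) integral to be finite, i.e. $\int_0\bigl(\int_0^{\xi}\s^{(\rho)}(\eta)\,d\eta\bigr)\m^{(\rho)}(\xi)\,d\xi<\infty$, which with $\s^{(\rho)}(\xi)\sim\xi^{\mu-1}$ and $\m^{(\rho)}(\xi)\sim\xi^{-\mu}$ reduces to $\int_0 \xi^{\mu}\,\xi^{-\mu}\,d\xi<\infty$ --- true, but this is the step that yields the conclusion. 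Second, at $F=\infty$ your phrase ``both Feller integrals diverge'' needs care: the speed density itself is integrable at infinity (it decays like $x^{-1/2}e^{-4\sqrt{2\rho x}/\nu}$), so it is the two \emph{iterated} integrals that diverge; natural rather than entrance follows because the exponential blow-up of $\s^{(\rho)}$ forces the second iterated integral to diverge even though $\int^\infty \m^{(\rho)}<\infty$. With these two computations made explicit, your outline is a complete and self-contained proof of the cited result.
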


The density, $q(F_0;\tau)$, for the FHT at the origin for a Bessel-$\mathsf{K}$ process started at $F_0>0$
is readily derived by using equation (\ref{FHTPDFlim}), giving
the generalized inverse Gaussian distribution:
\begin{equation}
q(F_0;\tau) =
\displaystyle\frac{\big(2x_0/\rho\nu^2\big)^{\mu/2}}
{2\, K_\mu\big(2\sqrt{2\rho x_0}/\nu\big)}\,
\tau^{-\mu - 1} e^{-\rho\tau -
2x_0/\nu^2\tau},\,\,\,\tau > 0,\;x_0 = \X(F_0). \label{GIG}
\end{equation}


\subsection{The Confluent-$\mathcal{U}$ Diffusions} \label{sect4.5}
The confluent hypergeometric family of $F$-diffusions arises from an underlying CIR process with $\mu>0$. Here we specialize to the \emph{confluent}-$\U$ family with
generating function $u_\rho(x)=\varphi^-_\rho(x)=\U(\upsilon,\mu+1,\kappa x)$
and mapping
\begin{equation} \label{FmapCIRU}
\F(x)=c\frac{\M(\upsilon+\frac{b}{\lambda_1},\mu+1,\kappa
x)}{\U(\upsilon,\mu+1,\kappa x)},
\end{equation}
where  $\upsilon \equiv \frac{\rho}{\lambda_1}$, $\mu\equiv\frac{2\lambda_0}{\nu^2}-1$, $\kappa \equiv \frac{2\lambda_1}{\nu^2},$ and $c$ are
arbitrary positive constants, and $r>-\rho$. The confluent hypergeometric functions $\M$ and
$\U$ are two linearly independent solutions to Kummer's differential equation (see \cite{AS72} for definitions and properties).

The function $\F(x)$ maps $x\in(0,\infty)$ onto
$F\in(0,\infty)$ and is monotonically increasing. This
transformation leads to a family of processes $(F_t)\in (0,\infty)$
with the diffusion coefficient function
\begin{equation} \label{SigmaCIRU}
    \hspace{-0mm}\sigma(\F(x))= c\kappa\nu\sqrt{x}\left(
    \textstyle\frac{\upsilon \,\M\left({\rho+r\over \lambda_1},\mu+1,\kappa x\right)\U\left(\upsilon + 1 ,\mu+2,\kappa x\right)} {\U^2\left(\upsilon,\mu+1,\kappa x\right)} +
    \textstyle\frac{({\rho+r\over \lambda_1})\,\M\left({\rho+r\over \lambda_1}+1,\mu+2,\kappa x\right)}{(\mu + 1)\,\U\left(\upsilon,\mu+1,\kappa x\right)}\right)\hspace{-3mm}
\end{equation}

\begin{lemma}[Campolieti and Makarov, \cite{CM07,CM09}] \label{lemU} The processes of the confluent
 $\mathcal{U}$-family solving the SDE $dF_t=rF_tdt+\sigma(F_t)dW_t$ with (\ref{FmapCIRU})--(\ref{SigmaCIRU}) have the same boundary classification as that for the Bessel-$\mathsf{K}$ in Lemma~\ref{lemK}. Moreover, the discounted process $(e^{-rt} F_t)_{t\ge 0}$ is a martingale. The transition PDF $p_F$ is given by (\ref{PrKernelF}) with $\nu(x)=\nu\sqrt{x}$, and $\sigma$ and $p_X$ respectively specified by (\ref{SigmaCIRU}) and (\ref{CIRden}).
\end{lemma}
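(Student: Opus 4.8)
The plan is to follow, step for step, the proof of Lemma~\ref{lemK}, with the underlying squared Bessel process replaced by the CIR diffusion~(\ref{CIRmodel}) and the modified Bessel functions replaced by the confluent hypergeometric functions $\M$ and $\U$. The three assertions are handled by the three pillars of the canonical-transformation machinery of Section~\ref{subsect4.3}. I would dispose of the transition density first: since the confluent-$\U$ family is by construction an $F$-diffusion whose underlying $X$-diffusion is the CIR process, formula~(\ref{PrKernelF}) applies verbatim with $\nu(x)=\nu\sqrt{x}$, $u_\rho(x)=\U(\upsilon,\mu+1,\kappa x)$, and $p_X$ the CIR kernel~(\ref{CIRden}). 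The only item to verify is that the $\sigma$ produced by the general Wronskian formula~(\ref{sigmaF}) coincides with~(\ref{SigmaCIRU}); this is a direct evaluation of $W(x)=u_\rho v'_{\rho+r}-u'_\rho v_{\rho+r}$ with $v_{\rho+r}(x)=c\,\M((\rho+r)/\lambda_1,\mu+1,\kappa x)$, using the standard derivative and recurrence relations for $\M$ and $\U$.

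For the martingale claim, I would first note that, by the canonical construction, the $F$-diffusion has generator with linear drift $rF$, so that $d(e^{-rt}F_t)=e^{-rt}\sigma(F_t)\,dW_t$ is driftless and $e^{-rt}F_t$ is a nonnegative local martingale. The structural reason is the eigenfunction identity: $v_{\rho+r}$ solves $(\G\,\varphi)=(\rho+r)\varphi$ and $u_\rho$ solves $(\G\,\varphi)=\rho\varphi$, so $e^{-(\rho+r)t}v_{\rho+r}(X_t)$ and $e^{-\rho t}u_\rho(X_t)$ are $\P$-local martingales for the CIR process; since the $h$-transform to $X^{(\rho)}$ uses $e^{-\rho t}u_\rho(X_t)$ as its density, the quotient $e^{-rt}F_t=e^{-(\rho+r)t}v_{\rho+r}(X_t)/[e^{-\rho t}u_\rho(X_t)]$ is a local martingale under the transformed measure. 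To upgrade it to a true martingale I would invoke the boundary analysis below: a nonnegative local martingale is a supermartingale and can fail to be a martingale only through mass escaping to $F=\infty$; since that endpoint turns out to be non-attracting natural, no such leakage occurs and $\E[e^{-rt}F_t]=F_0$ follows. Absorption at $F=0$ causes no difficulty, as $\F(0+)=0$ and killed paths contribute the value $0$ to both sides.

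The principal work, and the main obstacle, is the boundary classification via Feller's test. Rather than integrate the complicated $F$-scale and $F$-speed densities directly, I would compute them by composing the $h$-transform rule $\s^{(\rho)}=\s/u_\rho^2$, $\m^{(\rho)}=\m\,u_\rho^2$ (starting from the CIR densities $\s(x)=x^{-\mu-1}e^{\kappa x}$, $\m(x)=\frac{2}{\nu^2}x^\mu e^{-\kappa x}$) with the change of variable $F=\F(x)$, reducing everything to the asymptotics of $\M$ and $\U$ near $x=0$ and $x=\infty$ (cf.~\cite{AS72}). Near $x=0$ one has $\U(\upsilon,\mu+1,\kappa x)\sim\frac{\Gamma(\mu)}{\Gamma(\upsilon)}(\kappa x)^{-\mu}$ and $\M(\cdot,\mu+1,\kappa x)\to1$, so $u_\rho(x)\sim\mathrm{const}\cdot x^{-\mu}$ and $\F(x)\sim\mathrm{const}\cdot x^{\mu}\to0$. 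This is exactly the $x^{-\mu}$ behaviour of $x^{-\mu/2}K_\mu(2\sqrt{2\rho x}/\nu)$ in the Bessel-$\mathsf{K}$ family, so the classification of $F=0$ (exit for $\mu\ge1$, regular killing for $0<\mu<1$) is identical to Lemma~\ref{lemK} and requires no fresh computation.

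At $F=\infty$ the two families genuinely differ. Here I would use $\U(\upsilon,\mu+1,\kappa x)\sim(\kappa x)^{-\upsilon}$ and $\M((\rho+r)/\lambda_1,\mu+1,\kappa x)\sim\frac{\Gamma(\mu+1)}{\Gamma((\rho+r)/\lambda_1)}e^{\kappa x}(\kappa x)^{(\rho+r)/\lambda_1-\mu-1}$, confirming that $\F$ grows exponentially and $F=\infty$ corresponds to $x=\infty$; feeding these into Feller's integrability conditions for the transformed scale and speed should yield a non-attracting natural boundary, with the exponential factors $e^{\pm\kappa x}$ in $\s$ and $\m$ dominating and keeping the estimates clean. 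The part most likely to demand care is the bookkeeping of how the weight $u_\rho^2$ and the Jacobian $\F'$ combine in the transformed measures, but since this runs in exact parallel with the Bessel-$\mathsf{K}$ argument, the proof of Lemma~\ref{lemK} serves as a template throughout.
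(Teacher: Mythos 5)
You should first note that the paper itself contains no proof of this lemma: as the bracketed attribution indicates, Lemma~\ref{lemU} (like Lemma~\ref{lemK}) is imported verbatim from \cite{CM07,CM09}, so any self-contained argument is necessarily a route the paper does not take. Judged on its own merits, your reconstruction handles two of the three claims soundly. The transition-density claim is indeed essentially definitional once (\ref{PrKernelF}) is granted, and matching (\ref{SigmaCIRU}) with the Wronskian formula (\ref{sigmaF}) is a mechanical use of the derivative/recurrence relations for $\M$ and $\U$. Your asymptotics are also correct: near $x=0$, $\U(\upsilon,\mu+1,\kappa x)\sim\frac{\Gamma(\mu)}{\Gamma(\upsilon)}(\kappa x)^{-\mu}$ reproduces exactly the $x^{-\mu}$ behaviour of $x^{-\mu/2}K_\mu(2\sqrt{2\rho x}/\nu)$, and since boundary classification is a local computation (the CIR scale and speed densities differ from the SQB ones only by factors $e^{\pm\kappa x}\to 1$), transferring the classification of $F=0$ from Lemma~\ref{lemK} is legitimate.

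The genuine gap is the martingale upgrade. The principle you invoke --- that a nonnegative local martingale ``can fail to be a martingale only through mass escaping to $F=\infty$,'' which the non-attracting natural boundary then rules out --- is false as a statement about paths: the canonical strict local martingales (the inverse three-dimensional Bessel process, or the driftless CEV process with $\beta>0$, i.e., the bubble regime of the very family in Section~\ref{subsect4.2}) never reach infinity yet are not martingales. For driftless one-dimensional diffusions the correct dichotomy is Kotani's: $F$ is a true martingale iff $\infty$ is \emph{not an entrance} boundary, i.e., iff $\int^\infty x\,\sigma^{-2}(x)\,dx=\infty$; with drift $rF$ the relevant test is instead non-explosion of the auxiliary diffusion obtained under the change of numeraire (drift $rF+\sigma^2(F)/F$), not the boundary classification of $F$ itself, which is what you established. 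In addition, $e^{-rt}F_t$ is time-inhomogeneous, so none of these homogeneous criteria apply verbatim to it. The clean way to close the argument inside the paper's own framework is to integrate: by (\ref{PrKernelF}) and $\F\cdot u_\rho=v_{\rho+r}$, one has $\E\left[e^{-rt}F_t\,;\,t<\tau_0\right]=\frac{e^{-(\rho+r)t}}{u_\rho(x_0)}\int_0^\infty v_{\rho+r}(x)\,p_X(t;x_0,x)\,dx$ with $v_{\rho+r}(x)=c\,\M\bigl(\tfrac{\rho+r}{\lambda_1},\mu+1,\kappa x\bigr)$, so the martingale claim is precisely the assertion that $e^{-(\rho+r)t}v_{\rho+r}(X_t)$ is a \emph{true} (not merely local) martingale for the CIR process absorbed at the origin --- a concrete integral identity for $\M$ against the kernel (\ref{CIRden}) that must be verified, and is what \cite{CM07,CM09} actually prove. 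As written, your argument delivers only the local martingale property.
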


The density for the first-hitting time at the origin, $q(F_0;\tau)$, for a confluent-$\U$ process started at $F_0>0$ is
\begin{equation} \label{Tricomi} q(F_0;\tau) = \left|{\mathcal T}^\prime(\tau)\right| \frac{e^{-\kappa x_0 {\mathcal T}(\tau)}( {\mathcal T}(\tau))^{\upsilon-1}(1+{\mathcal T}(\tau))^{\mu-\upsilon}}{\U(\upsilon,\mu+1,\kappa x_0)\Gamma(\upsilon)},\;\tau >0,
\end{equation}
where $x_0 = \X(F_0)$ and we use the time change ${\mathcal T}(\tau) \equiv \displaystyle\frac{e^{-\lambda_1 \tau}}{1-e^{-\lambda_1 \tau}}$.
The latter function in (\ref{Tricomi}) is known as a Tricomi exponential PDF (see \cite{Fitz02}) given by $p({\mathcal T}) = \displaystyle\frac{e^{-z{\mathcal T}}{\mathcal T}^{a-1}(1+{\mathcal T})^{b-a-1}}{\Gamma(a)\,\U(a,b,z)},$ ${\mathcal T}>0,$ where $a=\upsilon,$ $b=\mu+1,$ $z=\kappa x_0,$.
It integrates to unity thanks to the integral representation of $\U$ (see \cite{AS72}).

\subsection{Simulation of $F$-Diffusions} \label{sect4.6}
We generalize the sampling algorithms for an SQB process presented in Figure~\ref{AlgSQB3} and Figure~\ref{AlgSQB4}. Within that approach a path is sampled conditionally on the FHT at zero. The Bessel-\textsf{K} and confluent-$\mathcal{U}$ diffusion models are both absorbing at zero and have the first-hitting time distribution in analytically closed-form. For a sampling algorithm we only need to obtain the distribution of the respective bridge process. In doing, so we use one important observation that the distribution of an $F$-diffusion bridge is reduced to the distribution of a bridge of the respective underlying diffusion  (e.g. the  Bessel and CIR bridges).

By applying the analogue of formula (\ref{BRGPDFfull}) for an $F$-diffusion with PDF $p_F(t;F_0,F)$ in place of the PDF $p(t;x,y)$, and using the representation (\ref{PrKernelF}), we have the following expression for the bridge PDF of an $F$-diffusion with $F_{t_1}$ and $F_{t_2}$ tied at $F_1$ and $F_2$ respectively:
\begin{align}
  \label{BrdgDenF} b_F(t_1,t_2,t;F_1,F_2,F) 
   &= \frac{\nu(\X(F))}{\sigma(F)}\, b_X^{(\rho)}(t_1,t_2,t;\X(F_1),\X(F_2),\X(F)) \\
  \nonumber &= \frac{\nu(\X(F))}{\sigma(F)}\, b_X(t_1,t_2,t;\X(F_1),\X(F_2),\X(F))
\end{align}
where $b_X$ and $b_X^{(\rho)}$ denote the bridge PDFs of the diffusions $(X_t)$ and $(X^{(\rho)}_t)$, respectively.
Here, after plugging (\ref{PrKernelF}) in the formula of the bridge PDF $b_F$, we first cancel  Jacobians $\frac{\nu}{\sigma}$ and then cancel Doob's factors of the form $e^{-\rho t}\frac{u_\rho(y)}{u_\rho(x)}$.
If follows from (\ref{BrdgDenF}) that an $F$-diffusion bridge is obtained by applying the mapping function $\F$ to the bridge process for the underlying diffusion $(X_t)$ with $X_{t_1}$ and $X_{t_2}$ tied at $\X(F_1)$ and $\X(F_2)$ respectively. For example, in the particular case of the Bessel-\textsf{K} diffusion when the underlying process $(X_t)$ is a squared Bessel process, the $F$-bridge is just a nonlinear transformation of a standard Bessel bridge.

Our primary goal is to sample a path skeleton $(F_0,F_1,\ldots,F_N)$, $F_i\equiv F_{t_i}$ of an $F$-diffusion at times $t_i$, $i=0,1,\ldots,N$, $0=t_0<t_1<\cdots<t_N=T$, for a given initial condition $F_{t=0}=F_0$. The simulation scheme based on the bridge distribution is as follows:
  \begin{enumerate}[Step 1.]
    \item Sample the FHT, $\tau_0$, from the GIG or exponential Tricomi distribution for the Bessel-\textsf{K}  or Confluent-$\mathcal{U}$ model, respectively.
    \item Obtain a sample path $(X_0,X_1,\ldots,X_N)$ of the respective underlying process (the SQB or CIR diffusion) conditional on $X_0=\X(F_0)$ and $X_{\tau_0}=0$.
    \item Apply the respective mapping function $\F$ to obtain a sample path of the $F$-diffusion model: $F_i=\F(X_i)$, $i=1,2,\ldots,N$.
  \end{enumerate}
The main result is that this simulation scheme allows us to avoid a direct sampling from complicated transition probability distributions.

Let us present an alternative approach from \cite{CM08} to computing mathematical expectations of path functionals of the form $Q\equiv\E[f(F_1,F_2,\ldots,F_N)|F_0]$ for $F$-diffusion. By using a path integral approach, the expected value of such a path functional can be represented as a multivariate integral:
\[
    Q = \int_{\R^N} f(\F(x_1),\ldots \F(x_N)) e^{-\rho T}\frac{u_\rho(x_N)}{u_\rho(x_0)} \prod\limits_{k=1}^N p_X(t_k-t_{k-1};x_{k-1},x_k) dx_1\cdots dx_N.
\]
The integral above may be estimated by the Monte Carlo method. The underlying diffusion is simulated by sampling from the exact transition probabilty distribution.  The resulting unbiased estimator $\xi$ of the path integral $Q$ takes the form:
\[ \xi = f(\F(X_1),\ldots \F(X_N)) e^{-\rho T}\frac{u_\rho(X_N)}{u_\rho(X_0)},
\]
where the path $(X_0,X_1,\ldots,X_N)$ is sampled by using one of the algorithms from Section~\ref{sect3}. Notice that we cannot use the Euler method (or any other approximation method, which does not guarantee the positiveness of the approximation process) since the estimator $\xi$ is infinite if $X_N=0$. Using large and small argument asymptotics of the Bessel function $K$ and Kummer function $\U$, we obtain that the variance of $\xi$ is finite if  $\mu<1$. Notice that the use of an exact simulation method allows us to lift this restriction.

\section{Simulation Study} \label{sect5}
\subsection{Simulation of Randomizers} \label{sect5.1}
The three discrete probability distributions used in the construction of randomized gamma distributions are all log-concave and unimodal as is stated below.
\begin{lemma}
Let $Y$ be a Poisson, Bessel or incomplete gamma random variable. The distribution of $Y$ is log-concave. That is, the ratio
$\P\{Y=n+1\}/\P\{Y=n\}$ is decreasing in $n$. Furthermore, the distribution of $Y$ is unimodal and has a unique mode or two modes at consecutive integers. Moreover, one mode is always located at $m=\left\lfloor\lambda\right\rfloor$ for the Poisson distribution, at $m=\left\lfloor(\sqrt{b^2+\theta^2}-\theta)/2\right\rfloor$ for the Bessel distribution and at $m=\max(0,\lfloor \lambda -\theta\rfloor)$ for the incomplete gamma distribution.
\end{lemma}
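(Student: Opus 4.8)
The plan is to treat all three distributions uniformly through the single quantity $R(n)\equiv\P\{Y=n+1\}/\P\{Y=n\}$. First I would compute this ratio explicitly for each family; because the probability masses are built from factorials and Gamma functions, almost everything telescopes and leaves a simple rational function of $n$. For the Poisson law $\mathrm{P}(\lambda)$ one gets $R(n)=\lambda/(n+1)$; for the Bessel law $\mathrm{Bes}(\theta,b)$ of (\ref{ProbBessel}) one gets $R(n)=(b/2)^2/[(n+1)(n+\theta+1)]$; and for the incomplete gamma law $\mathrm{I}\Gamma(\theta,\lambda)$ of (\ref{ProbIG}) one gets $R(n)=\lambda/(n+\theta+1)$. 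In every case the numerator is constant in $n$ while the denominator is strictly increasing in $n$ on $\{0,1,2,\ldots\}$ (using $\theta>-1$ for the Bessel case and $\theta>0$ for the incomplete gamma case), so $R(n)$ is strictly decreasing. Since the statement's notion of log-concavity is exactly the requirement that $R(n)$ be decreasing, establishing this monotonicity immediately settles the first assertion.

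Unimodality then follows from the same monotonicity with no further analysis: a strictly decreasing $R$ crosses the level $1$ at most once, so the mass function increases while $R(n)>1$ and decreases once $R(n)<1$. This forces a single peak, and a pair of equal consecutive maxima can occur only in the exceptional case where $R(n)=1$ for some integer $n$, which is precisely the ``two modes at consecutive integers'' alternative. Strictness of the decrease is what prevents $R$ from equalling $1$ at two or more integers, ruling out a mode plateau of length three or more.

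To pin down the mode I would locate the largest $n$ for which $\P\{Y=n\}\ge\P\{Y=n-1\}$, i.e. $R(n-1)\ge 1$. For the Poisson law this reads $\lambda/n\ge 1$, giving the mode at $\lfloor\lambda\rfloor$. For the incomplete gamma law it reads $\lambda/(n+\theta)\ge 1$, i.e. $n\le\lambda-\theta$; since here $n$ is restricted to the nonnegative integers, the mode sits at $\max(0,\lfloor\lambda-\theta\rfloor)$. For the Bessel law the condition $n(n+\theta)\le b^2/4$ is quadratic in $n$, and solving it yields $n\le(\sqrt{\theta^2+b^2}-\theta)/2$, hence the mode $\lfloor(\sqrt{b^2+\theta^2}-\theta)/2\rfloor$.

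The argument is essentially a bookkeeping exercise, so I do not anticipate a genuine obstacle; the only points demanding care are the quadratic in the Bessel case, where one must verify that the relevant root is the positive one, and the boundary truncation $n\ge 0$ for the incomplete gamma distribution, which is exactly what produces the $\max(0,\cdot)$ in the mode formula.
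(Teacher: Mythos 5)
Your proof is correct, but it takes a genuinely different route from the paper: the paper does not actually prove the lemma, instead deferring entirely to the literature --- \cite{Devr01} for the Bessel distribution and \cite{CM07} for the incomplete gamma distribution --- with the Poisson case left as classical. Your computation, by contrast, is self-contained and treats all three families uniformly. The ratios you derive, $R(n)=\lambda/(n+1)$ for the Poisson law, $R(n)=(b/2)^2/\bigl[(n+1)(n+\theta+1)\bigr]$ for the Bessel law of (\ref{ProbBessel}), and $R(n)=\lambda/(n+\theta+1)$ for the incomplete gamma law of (\ref{ProbIG}), are exactly what the probability masses give after the factorials and Gamma functions telescope; each is strictly decreasing, which settles log-concavity, and unimodality with at most two consecutive modes follows as you say from the fact that a strictly decreasing $R$ can equal $1$ at most once. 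Your mode derivations also check out: solving $R(n-1)\ge 1$ gives $n\le\lambda$ for Poisson and $n\le\lambda-\theta$ for incomplete gamma (with the $\max(0,\cdot)$ truncation handled correctly), and in the Bessel case the quadratic $n(n+\theta)\le b^2/4$ has one negative and one positive root (their product is $-b^2/4<0$), so the binding constraint is indeed $n\le(\sqrt{\theta^2+b^2}-\theta)/2$, yielding the stated mode. What the paper's citation buys is brevity; what your argument buys is a complete, elementary, unified proof that also covers the Poisson case and the explicit mode locations, which the cited works treat only distribution by distribution.
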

\begin{proof} See \cite{Devr01,CM07} for the proof for the Bessel and incomplete gamma distributions, respectively. \end{proof}

To generate a Bessel or incomplete gamma random variate, we can use a generic acceptance-rejection (A-R) method from \cite{Devr87} stated below without proof.
\begin{lemma}[Devroye, \cite{Devr87}]
For any discrete log-concave distributions with mode at $m$, we
have, for all $n\ge 0$: $p_n\le p_m\min\left\{1,e^{1-p_m|n-m|}\right\}.$
\end{lemma}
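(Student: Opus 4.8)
The plan is to reduce the two-sided claim to a one-sided decay estimate and then play log-concavity off against the normalization $\sum_n p_n=1$. The bound $p_n\le p_m$ is immediate, since $m$ is the mode and hence $p_m=\max_k p_k$; this already settles the factor $1$ inside the minimum. Reflecting the sequence about $m$ shows that $(p_{m-j})_{j\ge0}$ is log-concave with maximum $p_m$ in exactly the same way that $(p_{m+j})_{j\ge0}$ is, so it suffices to treat $n>m$. Writing $q\equiv p_m$, $N\equiv n-m\ge1$, and $u\equiv p_n/p_m\in(0,1]$, I only need the exponential estimate $u\le e^{1-qN}$, and only in the regime $qN\ge1$; when $qN<1$ one has $e^{1-qN}\ge1\ge u$, so the minimum is attained at $1$ and $p_n\le p_m$ already suffices.

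First I would use that log-concavity makes $j\mapsto\log p_{m+j}$ concave, so on $\{0,1,\dots,N\}$ it lies above the chord joining its endpoints: setting $s\equiv u^{1/N}\in(0,1]$ this gives $p_{m+j}\ge q\,s^{j}$ for $0\le j\le N$ (the support of a log-concave law is an interval, so these terms are positive whenever $p_n>0$, and the case $p_n=0$ is trivial). Summing and invoking $\sum_k p_k\le1$ yields the single master inequality $q\sum_{j=0}^{N}s^{j}\le1$, from which I would extract two consequences. Bounding each term by the smallest one, $\sum_{j=0}^{N}s^{j}\ge(N+1)s^{N}=(N+1)u$, produces the crude polynomial estimate $u\le\frac{1}{q(N+1)}$. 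Comparing the decreasing geometric sum to an integral, $\sum_{j=0}^{N}s^{j}\ge\int_0^{N+1}s^{x}\,dx=\frac{1-s^{N+1}}{\ln(1/s)}$ (valid since $s<1$ throughout the regime $qN\ge1$, as $s=1$ forces $u=1$ and thence $qN<1$), produces the sharper $\ln(1/s)\ge q\,(1-s^{N+1})$.

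The final step fuses the two estimates. Since $\ln(1/s)=\tfrac1N\ln(1/u)$ and $s^{N+1}=u\,s\le u\le\frac{1}{q(N+1)}<\frac{1}{qN}$, the sharper inequality becomes $\tfrac1N\ln(1/u)\ge q\bigl(1-s^{N+1}\bigr)>q\bigl(1-\tfrac{1}{qN}\bigr)=q-\tfrac1N$, whence $\ln(1/u)>qN-1$ and therefore $u<e^{1-qN}$, as required. The main obstacle is precisely the interplay exposed here: the naive geometric estimate $p_n\le q\,(p_{m+1}/p_m)^{N}$ does not by itself deliver the exponential \emph{rate} $q$ in the exponent, failing exactly when the top ratio $p_{m+1}/p_m$ is near $1$, while the chord-plus-normalization argument alone yields only the polynomial rate $\frac{1}{q(N+1)}$. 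The trick is to feed the crude polynomial bound back in to control the correction term $s^{N+1}$, thereby upgrading the polynomial decay to exponential decay at the sharp rate $q=p_m$, with the factor $e^{1}$ supplying exactly the slack this argument consumes.
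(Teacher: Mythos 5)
Your proposal is correct, but note the peculiarity of the comparison: the paper deliberately gives \emph{no} proof of this lemma --- it is quoted from Devroye \cite{Devr87} and explicitly ``stated below without proof'' --- so what you have produced is a self-contained proof of an imported result, not an alternative to an argument in the text. Checking your steps: the reduction to $n>m$ by reflection is legitimate, since log-concavity, the mode property, and $\sum_k p_k\le 1$ are all preserved under $j\mapsto p_{m-j}$; the chord inequality $p_{m+j}\ge p_m s^j$ with $s=u^{1/N}$ is the standard ``concave sequence lies above its chords'' fact, and the positivity of the intermediate terms is correctly disposed of via the interval-support property; summing against normalization gives the master inequality $q\sum_{j=0}^N s^j\le 1$; and the two extractions --- the crude bound $u\le 1/(q(N+1))$ from $s^j\ge s^N$, and the integral comparison $\ln(1/s)\ge q\left(1-s^{N+1}\right)$ --- do combine as you claim, since $s^{N+1}\le u<1/(qN)$ turns the latter into $\ln(1/u)>qN-1$, i.e.\ $u<e^{1-qN}$, exactly in the only regime ($qN\ge 1$) where the exponential branch of the minimum is binding. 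Your bootstrap --- feeding the polynomial bound back in to tame the correction term $s^{N+1}$ --- is a clean way to close the circle; an equivalent finish, closer to Devroye's original, avoids the bootstrap by noting that $g(y)=y-qN\left(1-e^{-y}\right)$ is convex with $g(0)=0$ and $g'(0)<0$ when $qN>1$, so $g(y)\ge 0$ with $y=\ln(1/u)>0$ forces $y$ past the positive root, which lies above $qN-1$ because $g(qN-1)=qN\,e^{1-qN}-1\le 0$ (i.e.\ $\ln z\le z-1$ with $z=qN$). Both devices consume the same factor $e^{1}$ of slack, so your proof recovers the stated constant exactly.
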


As an alternative sampling method we use the inversion method by chop-down  search (C-D-S) from the mode $m$. Such a sampling method for a discrete distribution with probabilities $\{p_k\}_{k\geq 0}$ is based on the numerical inversion of the CDF $F$ by the formula
$F^{-1}(u)=\mathrm{arg}\min\{n\geq 0 \mid u-\sum_{k=0}^n p_k< 0\},\,u\in[0,1].$

It is well known that the computational cost of such a method has the lowest possible value if and only if the vector of discrete probabilities is arranged in increasing order. Instead of the preliminary computation of probabilities followed by sorting of them, we start the search algorithm at the mode $m$ and then successively calculate probabilities of values to the left and to the right of the mode choosing the largest one. Notice that probability $p_m$ need only be computed once, and that other probabilities can be obtained by using simple recurrences.

We now present some numerical results comparing the two methods of simulation of P($\lambda$), Bes($\theta$,b), and I$\Gamma(\theta,\lambda)$ random variables. For each of the two methods, one million values are sampled. For simulation of each of the Poisson random variables, the parameter $\lambda$ is allowed to vary as a continuous uniform random variable. For the two parameter Bessel and incomplete gamma distributions, the first parameter is allowed to vary as a continuous uniform random variable while the second parameter is held constant. Then the procedure is repeated by allowing the second parameter to vary while the first one is held constant. Results of these tests are given in Table~\ref{Table:SampRand}.
\begin{table}[h]
{\footnotesize
	\caption{Comparison of the acceptance-rejection and chop-down search methods for the Poisson, Bessel, and incomplete gamma distributions. \label{Table:SampRand}}
	\begin{center}
	\begin{tabular}{ l l c c c c}
                \multicolumn{2}{l}{Distribution}  &  \multicolumn{2}{c}{A-R Method} & \multicolumn{2}{c}{C-D-S Method}\\
		& & Time & No. of Iter. & Time &  No. of Iter. \\ \hline
		$\mathrm{P}(\lambda)$ & $\lambda \sim \mathrm{U}(0,1000)$ & 189.9 & 2.6 & 35.2 & 34.2\\ \hline
		$\mathrm{Bes}(\theta,b)$ & $\theta \sim \mathrm{U}(0,1000)$, $b = 10$ &  220.6 & 1.6 & 100.4 & 1.1 \\ \hline
		$\mathrm{Bes}(\theta,b)$ & $\theta = 10$, $b \sim \mathrm{U}(0,1000)$ & 414.1 & 4.0 & 103.3 & 17.3 \\ \hline
		$\mathrm{I}\Gamma(\theta,\lambda)$ & $\theta \sim \mathrm{U}(0,100)$, $\lambda = $10 &  336.6 & 3.7 & 51.1 & 1.8 \\ \hline
		$\mathrm{I}\Gamma(\theta,\lambda)$ & $\theta = 10, \lambda \sim \mathrm{U}(0,1000)$ & 363.7 & 3.9 & 51.4 & 10.8\\ \hline
	\end{tabular} \\
	\end{center}
}
\begin{spacing}{0.5}
{\footnotesize \textit{Note}: Time in seconds and average number of iterations for the simulation of $10^6$ random variables from the Poisson, Bessel, and incomplete gamma distributions using the acceptance-rejection (A-R) and chop-down search (C-D-S)  methods.}
\end{spacing}
\end{table}

Table 1 shows that the chop-down search method from the mode is significantly faster than the acceptance-rejection technique for generating random variables in every case and is a much better choice for simulating random variables when it can be implemented.

\subsection{Comparison of Sampling Schemes for the SQB Process} \label{subsect5.2}
In this section we aim to compare the following three sampling schemes.
\begin{enumerate}[1)]
  \item Sequential sampling conditional on the FHT $\tau_0$ with the use of the randomized gamma distribution of the first kind.
  \item Bridge sampling conditional on the FHT $\tau_0$ with the use of the randomized gamma distribution of the second kind.
  \item Unconditional sequential sampling with the use of the randomized gamma distribution of the third kind.
\end{enumerate}
We start by sampling multiple paths of the SQB process over a discretized partition of a time interval $[0,T], 0=t_0<t_1<\cdots<t_N=T,$ using one of the three methods just mentioned. Then we average these sample paths in order to approximate the mean of the SQB process. To study the sampling algorithms, we compare our sample means to the true mean of the SQB process as well as the time required to simulate a set number of sample paths of the process.

For calculation of the mean of the SQB process, we use the formula
\begin{equation*} \E[X_t] = \frac{x_0+\lambda_0 t}{\Gamma(\vert \mu \vert)}\gamma\left(\vert \mu \vert,\frac{x_0}{2t} \right)  + \frac{x_0}{\Gamma(\vert \mu \vert)}\left(\frac{x_0}{2t}\right)^{\vert \mu \vert-1} \exp\left(-\frac{x_0}{2t}\right),
\end{equation*}
which is valid for $\mu < 0$ and $\nu=2$. The expression is derived by considering the moment generating function of the SQB process at time $t$ and using the small asymptotics of the Bessel $I$ function.

To this end, we look at the largest amount by which the sample mean, $\bar{\mu}_t$, differs from the true mean, $\mu_t \equiv \E[X_t]$, (the maximum absolute error) at times $t_i, i=0,1,\ldots,N$, given by $\displaystyle \max_{i=0,1,\ldots,N} \vert \mu_{t_i} - \bar{\mu}_{t_i} \vert$. We will also examine the largest sample standard deviation of the process, given by $\displaystyle \max_{i=0,1,\ldots,N} \bar{\sigma}_{t_i}/\sqrt{n}$, where $n$ is the sample size. After simulating one million sample paths for each of the three sample schemes and averaging them, we obtain the data shown in Table~\ref{Table:SQBsampling}.
\begin{table}[h]
{\footnotesize
	\caption{Comparison of sampling schemes for the SQB process. \label{Table:SQBsampling}}
	\begin{center}
	\begin{tabular}{c | ccc | ccc | ccc}
              \multicolumn{1}{c}{} & \multicolumn{3}{c}{\textbf{Scheme 1}} & \multicolumn{3}{c}{\textbf{Scheme 2}} & \multicolumn{3}{c}{\textbf{Scheme 3}}\\
  $\mu$ & Time  & MAE & MST & Time  & MAE & MST & Time  & MAE & MST\\ \hline
  -0.25& 1741 & .00244 & .00271 & 5044 & .00297 & .00270 & 2501 & .00534 & .00270 \\ \hline
  -0.5 & 1600 & .00111 & .00252 & 4462 & .00191 & .00251 & 2280 & .00158 & .00251 \\ \hline
  -1.5 & 953. & .00193 & .00144 & 2614 & .00240 & .00145 & 1406 & .00149 & .00145 \\ \hline
	\end{tabular} \\
	\end{center}
}
\begin{spacing}{0.5}
{\footnotesize \textit{Note}: Time in seconds, maximum absolute error (MAE), and the maximum standard deviation (MST) taken from the average of $10^6$ sample paths of the SQB process using sampling schemes 1, 2, and 3 respectively for varying values of $\mu$. For all three choices of $\mu$, we set $X_0 = 1,$ $T = 1$, $\nu=2$, and the partition of $[0,T]$ to be $0, \frac{1}{32}, \frac{2}{32}, \ldots ,1.$
}
\end{spacing}
\end{table}
From this data, we can see that sampling scheme 1  is the fastest one. Scheme 2 is much slower than schemes 1 and 3 since it involves sampling from the Bessel distribution.

\subsection{Sampling from the GIG and Tricomi Exponential Distributions} \label{sect5.3}
A common approach to sampling from a nonstandard probability distribution is to use an~acceptance-rejection method. This approach is employed in \cite{ATK82} and \cite{Fitz02} for sampling from the GIG and Tricomi exponential distributions, respectively. If the parameters of a probability distribution remain constant, then a much faster sampling technique is the one that is based on the numerical inversion of a distribution function. To sample from a continuous CDF $F$ by using the inverse transform method, we generate a uniformly distributed on $(0, 1)$ random variable $U$ and then set $X=F^{-1}(U)$, where $F^{-1}$ the inverse of $F$.

In cases where the inverse of $F$ can not be expressed in closed-form, the inverse transform relies on numerical approximation. A root-finding method such as Newton's method or the bisection method can be applied to solve equation $F(X)=U$, $U\in(0,1)$. A faster approach is to compute the CDF on a fine mesh and then approximate the inverse of the CDF by some simpler functions. The simplest method is to use a piece-wise linear interpolation. In \cite{HORM03} a fast and efficient variate generation method is proposed. In that method, the inverse CDF $F^{-1}$ is approximated by the Hermite interpolation functions. For a given partition $l=x_0<x_1<\cdots<x_n=r$ of the support $(l,r)$ of a CDF $F$, the distribution function is computed by either integrating the density function on each subinterval $(x_{k-1},x_k)$, or by employing an ODE solver, since the CDF $F$ solves a simple ODE $F'(x)=f(x)$, $x\in(l,r)$, $F(l)=0$, where $f$ is the respective PDF.

\subsection{Path-Dependent Options} \label{subsect5.4}
This section reviews some discretely-monitored path-dependent options that will be used for pricing options in the following subsection. First, we assume that we have sampled a path of a asset price process $(F_t)$ over a discrete time partition, $\mathbf{T}=\{t_i \}_{i=0,1,\ldots, N}$, of the time interval $[0,T],$ $T>0$. Let the values of process $(F_t)$ at time points $t = t_i$ be denoted by $F_i$, for all $i=0,1,\ldots, N$.

The payoff function of an Asian-style option depends on the arithmetic average of the underlying asset values: $A_N = \frac{1}{N}\sum_{i=1}^N F_i$. For an \textit{average price call option}, the payoff to the option holder at time $T$ is ($A_N-K)_+$ where $K$ is the strike price and $(x)_+ \equiv \max(x,0)$. The \textit{average price put option} is defined similarly. Its payoff at time $T$ is ($K-A_N)_+$.

The second type of path-dependent options we will price are lookback options. In this case, the payoff functions depend on the maximum, $M_N = \displaystyle \max_{i=0,1,\ldots,N} F_i$, or the minimum, $m_N= \displaystyle \min_{i=0,1,\ldots,N} F_i$, values of the underlying asset price attained during the option's life, $[0,T]$. A \textit{standard lookback call} gives the right to buy at the lowest price recorded during the options life. Hence, the payoff to the holder at time $T$ is $F_N - m_N$. A \textit{standard lookback put} gives the right to sell at the highest price recorded during the options life. Thus, the payoff at time $T$ is $M_N - F_N$.

\subsection{Pricing Path-dependent Options under Nonlinear Volatility Models} \label{subsect5.5}
In this section we present some numerical results regarding pricing Asian and lookback options under the CEV, Bessel-$\mathsf{K}$ and Confluent-$\mathcal{U}$ families of diffusions using Monte-Carlo algorithms based on generating from randomized Gamma distributions. Specifically, we look at a plain sequential Monte-Carlo sampling method (MCM) and a randomized quasi Monte-Carlo method (RQMCM) which uses digital scrambling via a Sobol's sequence for the randomization. For the Bessel-$\mathsf{K}$ and Confluent-$\mathcal{U}$ models, we also use the weighted method (MCMW) described in Subsection~\ref{sect4.6}. One million simulations are completed for each payoff function and are then averaged to get the final option pricing results. For the RQMC method, these $10^6$ simulations correspond to 100 randomizations and $10\,000$ simulations per randomization.

In the tests that follow, we fix the value of the annual local volatility function $\sigma_{loc}(S_0) = 0.25$ at the initial asset price $S_0 = 100$. The strike price is $K = 100$. The interest rate is $r = 0.02$ per annum and all options have six months to expiration: $T = 0.5$. The number of asset price observations is $N = 128$.
First we look at pricing under the CEV model. For the CEV model, $\sigma_{loc}(S_0) = \delta S_0^\beta$. Typical observed values of the CEV elasticity parameter $\beta$ are strongly negative so we choose $\beta = -2$. Then we choose the parameter $\delta$ so that it satisfies $\delta F_0^\beta = 0.25$. This yields $\delta = 2500$.
Next we consider the Bessel-$\mathsf{K}$ subfamily of diffusions. To ensure that $\sigma_{loc}(F_0) = 0.25$ the following parameters are used: $\rho = 0.001$, $r = 0.02$, $c = 154.4870$, $\mu = 0.25$, and $\nu = 2$.
The last pricing model considered here is the Confluent-$\mathcal{U}$ family of diffusions. Specifically, we examine the case where $c = 788.3679$, $\rho = 0.001$, $\lambda_1 = 0.0009$, $\mu = 0.25$, and $\nu = 2$. Table~\ref{Table:OptionPrices} contains option pricing results corresponding to these models. The prices reported are obtained using the RQMC method. Table~\ref{Table:CostReduction} reports the computational cost of pricing the average price Asian call using the three methods.

\begin{table}[htb!]
{\footnotesize
	\caption{Pricing path-dependent options under the three models using the RQMC method. The value of the sample standard error is given after the $\pm$ sign.  \label{Table:OptionPrices}}
	\begin{center}
	\begin{tabular}{l r@{$\pm$}l r@{$\pm$}l r@{$\pm$}l r@{$\pm$}l}
	\textbf{Model} & \multicolumn{2}{c}{{\bf Asian Call}} & \multicolumn{2}{c}{{\bf Asian Put}} & \multicolumn{2}{c}{{\bf Lookback Call}} & \multicolumn{2}{c}{{\bf Lookback Put}} \\
\hline
CEV  & 4.30237 & .00081 & 3.80260 & .00160 &  14.55220 & .00255 &  12.09087 & .00300\\ Bessel-\textsf{K} & 4.28605 & .00049 & 3.79717 & .00033 & 13.15557 & .00113 & 13.23640 & .00081 \\
Confluent-$\mathcal{U}$ & 4.28724 & .00049 & 3.79922 & .00032 & 13.31158 & .00093  & 13.11594 & .00084\\
\hline
	\end{tabular}
	\end{center}
}
\end{table}

\begin{table}[htb!]
{\footnotesize
	\caption{Computational cost of pricing the average price Asian call option.\label{Table:CostReduction} }
	\begin{center}
	\begin{tabular}{l l r r r r}
	\textbf{Model} &  Method &  Smpl.Var., $\bar{\sigma}^2$ & Time (sec) & Cost, $\bar{\sigma}^{2}T$ & Relat. Cost\\
\hline
	{\bf CEV }	
    &	MCM   & 32.574 & 7438 & 242296 & 52.4\\
	&	RQMCM  & \phantom{3}0.065 & 70762 & 4622 & 1.0 \\
\hline
	{\bf Bessel-\textsf{K}}
	&	MCMW &  33.044 & 33291 & 1100088 & 52.6\\
	&	MCM   & 41.830 & 10506 & 439444 & 21.0\\
	&	RQMCM  &  \phantom{3}0.235 & 89029 & 20895 & 1.0\\
\hline
	{\bf Confluent-$\mathcal{U}$}
	&	MCMW    &  31.636 & 33312 & 1053853 & 49.4\\
	&	MCM    &  40.801 & 10174 & 415122 & 19.5 \\
	&	RQMCM  & \phantom{3}0.238  & 89715 & 21308 & 1.0 \\
\hline
	\end{tabular} \\
	\end{center}
}
\end{table}

As seen in Table~\ref{Table:CostReduction}, the RQMC method offers a clear improvement in reductive cost over the plain MC method. On the other hand, the weighted method offers no improvement in cost at all, mostly due to its relatively large computational time. The extra time required for the weighted method is partly due to the computation of special functions in the weight. It could also be attributed to sampling more points in each of the sample paths for a price process $(F_t)$. When conditioning on the FHT $\tau_0$ and sampling at time $t$, we check first whether $t \geq \tau_0$. If $t \geq \tau_0$ we do not have to sample from any probability distributions since $F_t = 0$. When using the weighted method we are looking at the case with no absorption so we don't have this benefit. In other words, for every point of the discretized sample path, we must sample from probability distributions which takes up more time. This combined with the fact that for $\mu \geq 1$ we have no guarantee that the mean of the weighted estimator is finite makes the weighted method a poor choice for pricing options. We have a much better choice in the exact sampling method.



\end{document}